\DeclareMathOperator*{\argmin}{argmin}
\newtheorem{theorem}{Theorem}
\newtheorem{remark}{Remark}
\newtheorem{proof}{Proof}
\definecolor{newcolor}{rgb}{.8,.349,.1}
\begin{document}


\begin{frontmatter}

\title{Learning Homeomorphic Image Registration\\ via  Conformal-Invariant Hyperelastic Regularisation}%

\author[1,2]{Jing Zou \fnref{fn1}}
\fntext[fni]{Work done during a visiting in Cambridge}
\author[3]{Noémie Debroux}
\author[2]{Lihao Liu}
\author[1]{Jing Qin \corref{cor1}}
\cortext[cor1]{Corresponding author: harry.qin@polyu.edu.hk}
\author[2]{Carola-Bibiane Schönlieb}
\author[2]{Angelica I Aviles-Rivero}

\address[1]{Center for Smart Health, School of Nursing, The Hong Kong Polytechnic University, HKSAR, China}
\address[2]{Department of Applied Mathematics and Theoretical Physics, 
University of Cambridge, Cambridge, UK}
\address[3]{Université Clermont Auvergne, France}


\begin{abstract}
Deformable image registration is a fundamental task in medical image analysis and plays a crucial role in a wide range of clinical applications. 
Recently, deep learning-based approaches have been widely studied for deformable medical image registration and achieved promising results. However, existing deep learning image registration techniques do not  theoretically guarantee 
topology-preserving transformations. This is a key property to preserve anatomical structures and achieve plausible transformations that can be used in real clinical settings.
We propose a novel framework for deformable image registration. Firstly, we introduce a  novel regulariser based on conformal-invariant properties in a nonlinear elasticity setting.
Our regulariser enforces  the deformation field  to be  smooth, invertible and orientation-preserving. 
More importantly, we strictly guarantee topology preservation yielding to a clinical meaningful registration.  Secondly, 
we boost the performance of our regulariser through coordinate MLPs, where one can view the to-be-registered images as continuously differentiable entities. 
We demonstrate, through numerical and visual experiments, that our framework is able to outperform current techniques for image registration.
\end{abstract}

\begin{keyword}
Homeomorphic Image Registration\sep Lung CT\sep Conformal Invariant Hyperelastic Regularisation
\end{keyword}

\end{frontmatter}


\section{Introduction}
Deformable image registration (DIR) is a crucial tool in modern healthcare, allowing physicians to compare and combine information from multiple images of the same patient or different patients. It is a fundamental task in medical images analysis and plays a key role in various clinical applications including image-guided interventions~\citep{tam2016image}, radiotherapy~\citep{krilavicius2016predicting, brock2017use}
diagnosis~\citep{tekchandani2022computer}, and treatment planning~\citep{krilavicius2016predicting}.

DIR aims to align two or more images by optimising a non-linear voxel-wise deformation mapping between the given target and source images.
Traditional methods treat DIR as a pair-wise optimisation problem, which relies on specific modelling assumptions. 
As a result, flexibility is limited, and it is difficult to handle complex and unpredictable deformations. Furthermore, the selection of an appropriate model, that accurately represents the deformation, can be challenging.
In addition, such iterative optimisation process is often time-consuming, which limits their applicability in real-time clinical scenarios.

Learning-based approaches have proliferated and have been widely studied for DIR and achieved promising results e.g.~\citep{cciccek20163d,heinrich2019closing,long2015fully,ronneberger2015u,liu2019probabilistic,zou2022deformable}.
Following this research line, several techniques and strategies have been proposed to improve registration performance. For example, cascaded networks~\citep{zhao2019recursive}, pyramid coarse-to-fine networks~\citep{mok2020large} or joint models~\citep{liu2021rethinking}.
However, the improvements are limited
due to two main challenges. Firstly, existing models assume to have, for the training stage, a large and well-represented dataset. However, in medical domain, acquiring such dataset is challenging. Additionally, even with a large dataset, the representations learned from training data cannot capture the optimal complex deformations for all image pairs. Secondly, 
current learning-based methods cannot theoretically and strictly guarantee diffeomorphism of the deformation field.
Although $L^2$ regularisation \citep{balakrishnan2019voxelmorph,hering2021cnn}, cycle consistency constraint \citep{kim2021cyclemorph} and inverse consistency constraint \citep{zhu2022swin,liu2022pc} are utilised to improve the quality of deformation fields, 
these frameworks strictly cannot theoretically guarantee diffeomorphic transformations--this is reflected in reporting negative values in the Jacobian determinant.

To handle the aforementioned challenges, optimising pair-wise transformations has been explored from the variational and learning perspectives. This set of techniques mitigates the need of having a large dataset. 
A crucial factor, for any type of image registration methods, is the regulariser used to enforce plausible transformations.  For example, the community largely has used Laplacian constraints, incompressibility constraints, and fluid-like regularisers~\citep{passieux2012high,christensen1996deformable,mansi2011ilogdemons}. However, these regularisers fail to handle large deformations or do not capture the characteristics of the target tissues. Perhaps hyperelastic regularisation~\citep{veress2005measurement,phatak2009strain,burger2013hyperelastic} is the most widely used constraint in the medical domain; as this principle matches with the characteristics of biological tissues. Learning-based methods for pair-wise registration have also used such types of regularisers. For example, the work of that~\citep{wolterink2022implicit} used the existing hyperelastic regulariser of~\citep{burger2013hyperelastic}. A weak constraint has been imposed on the Jacobian determinant in~\citep{han2023diffeomorphic}. In contrast to those works that use existing regularisers. In this work, we focus on proposing a novel regulariser that is more flexible and can provide theoretical guarantees. 

We summarise the contributions as follows:
\begin{itemize}

    \item We propose a novel pair-wise image registration framework that eliminates the need for pre-training or prior affine registration.

    \item The key contribution of this paper is that we introduce a novel conformal-invariant regulariser. The proposed regulariser unifies the hyperelasticity with conformal and Beltrami-based approaches. It enables simultaneous control of changes in length, area, and volume and of distortion with conformal mappings, as well as smoothness of deformations. Most importantly, \textit{our regulariser yields to homeomorphic transformations enforcing plausible transformations and theoretical guarantees.}

    \item We introduce a learned image deformation mapping that is driven by our novel regulariser via coordinate MLP, and the learning process is resolution-independent, offering more flexibility. 

    \item We demonstrate, through extensive experimental results, that our proposed framework yields to better performance than existing implicit and explicit regularisers.

\end{itemize}

\section{Related Works}
In this section, we thoroughly examine the existing literature and highlight the distinctive aspects of our work in comparison.

\subsection{Traditional Methods}
Over the past few decades, the field of image registration has primarily focused on traditional methods, leading to the development of toolboxes such as Elastix~\citep{klein2009elastix} and ANTs \citep{avants2009advanced}. These methods typically involve defining a transformation model, an optimisation function, and a similarity metric, followed by an iterative parameter optimisation process to find the optimal transformation. Various mathematical models have been utilized in these methods, including elastic-type models \citep{bajcsy1989multiresolution}, free-form deformation models based on cubic B-splines \citep{rueckert1999nonrigid}, statistical parametric mapping \citep{ashburner2000voxel}, and Demons-based approaches \citep{pennec1999understanding}.
In order to ensure smoother and invertible deformation fields, diffeomorphic transformations have gained significant attention in the past decades \citep{oliveira2014medical}. 
Prominent examples of diffeomorphic transformation methods include Large deformation diffeomorphic metric mapping (LDDMM)\citep{beg2005computing,cao2005large,glaunes2008large}, Symmetric Normalization (SyN)\citep{rogelj2006symmetric,beg2007symmetric,avants2008symmetric} (integrated into ANTs), and diffeomorphic Demons \citep{vercauteren2009diffeomorphic}. Despite their achievements, these methods are computationally intensive due to their reliance on iterative optimisation strategies. In addition, selecting an appropriate transformation model that accurately captures the deformations between input image pairs is still challenging.

\subsection{Learning-based Methods}
In learning-based methods, the deformation mapping from the source image to the target image is typically parameterised via deep networks, and the mapping is learned from lots of training image pairs. The main advantage of learning-based methods is the remarkable speed in the inference phase once the training is complete.

Supervised learning methods usually rely on ground truth deformation fields, which are either obtained from traditional methods \citep{yang2017fast,fan2019birnet} or synthesized through manual deformable transformation \citep{sokooti2017nonrigid,sokooti20193d}.
However, there are challenges associated with obtaining ground truth deformation fields. On one hand, generating real deformation fields can be time-consuming and computationally expensive. On the other hand, using pseudo ground-truth deformation fields that obtained through manual deformable transformations may not accurately represent the true distribution of misalignments in real-world scenarios.

To handle the challenges, researchers have turned to unsupervised methods, whose training relies on image-wise similarities rather than the ground truth deformation fields. 
The popular VoxelMorph~\citep{balakrishnan2019voxelmorph} is based on Unet for deformable brain MR image registration. 
With only paired images as input, VoxelMorph efficiently predicts the corresponding deformation field, and the network is updated by minimizing the intensity-based discrepancy between the warped and the target images. Besides, other strategies are proposed to improve the performance of learning-based registration methods. For example, cascaded networks \citep{zhao2019recursive,zhao2019unsupervised}, multi-stage pyramid network \citep{de2019deep,mok2020large}, and weak supervisions \citep{xu2019deepatlas,hering2021cnn}. In cascaded networks, the source image is progressively warped by the deformation field from each cascade network, finally aligned to the target image. The final deformation field is the composition of all fields from these cascaded networks. In multi-stage pyramid network, the deformation field is learned in a coarse-to-fine manner, where the registration is performed at multiple scales of image pyramids. For weak supervisions, the segmentation masks or the anatomical landmarks are utilized to guide the learning.

Although these methods have demonstrated improvement of the registration performance, there still exist challenges for learning-based methods.  
Firstly, the training of these networks usually requires a large dataset. However, acquiring a well-represented and sufficiently large dataset in the medical domain is challenging. 
Secondly, they can not theoretically guarantee the diffeomorphism of the learned deformation field.

\subsection{Regularisations in Registration}
In deformable medical image registration, anatomical structures often exhibit local and non-rigid variations. 
Achieving accurate registration necessitates a deformation field capable of capturing these variations at a high resolution with a high degree of freedom.
The most challenging part of such registration is to ensure that the estimated deformation field between images is smooth and realistic, while avoiding overfitting due to the high degree of freedom. 
The key to a successful registration is the choice of the regularisation as it determines the types of transformations that the model can generate.

In traditional registration methods, many regularisations are proposed and they are commonly model-dependent. In the elastic model, the first-order derivatives or the second-order derivatives of the deformation field are penalised to ensure the smoothness \citep{modersitzki2003numerical}.
Regularisation for the B-spline model is performed by applying constraints on the displacements of the B-spline control points \citep{rueckert1999nonrigid}. For fluid models, the regularisation is employed by  penalising spatial derivatives of the velocity fields \citep{vialard2012diffeomorphic}.
Additionally, laplacian constraints, incompressibility constraints, and fluid-like regularisers~\citep{passieux2012high,christensen1996deformable,mansi2011ilogdemons} are proposed for specific registration tasks.


Many learning-based registration methods employ a simple and generic regularization, which takes the form of a penalty on the first-order derivative of the deformation field~\citep{balakrishnan2018unsupervised,balakrishnan2019voxelmorph,hering2021cnn,zhao2019recursive}. The performance is limited as it assumes uniform smoothness properties irrespective of image content and allowable deformations.
Furthermore, some networks utilised 
cycle consistency constraint \citep{kim2019unsupervised,kim2021cyclemorph} and inverse consistency constraint \citep{zhu2022swin,liu2022pc} to improve the quality of the deformation field. However, these constraints are performed in a bi-directional manner, which introduces additional computational complexity. Moreover, they can not theoretically guarantee plausible transformations.

\section{Methodology}

\begin{figure*}[!ht]
\centering
\setlength{\abovecaptionskip}{0pt}
\setlength{\belowcaptionskip}{0pt}
\includegraphics[width=\linewidth]{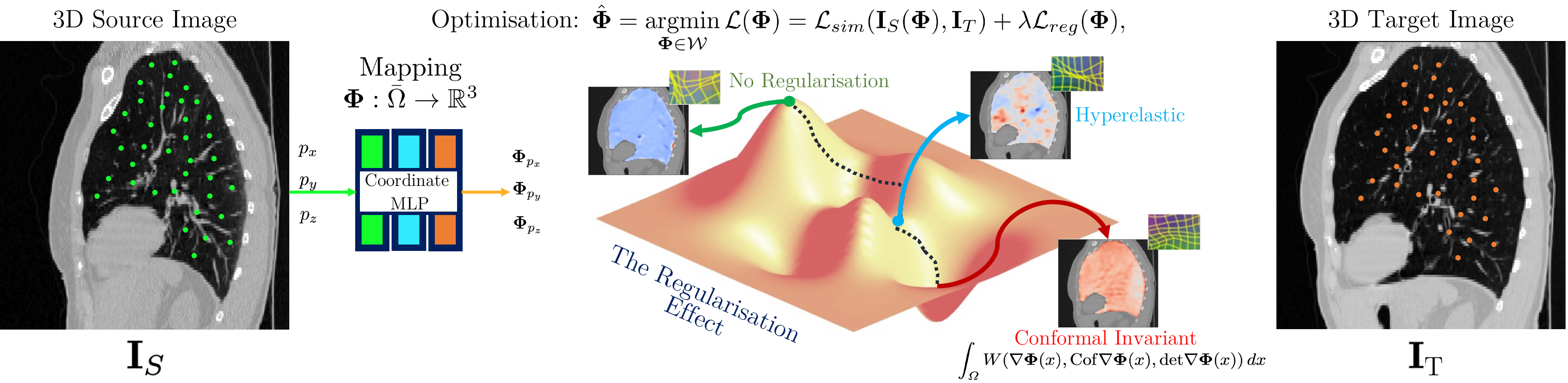}
\caption{Our proposed workflow framework. We seek to optimise, through a coordinate MLP, the mapping $\mathbf{\Phi}$ to align the coordinates between the source and target images. Our highlight is a new regulariser, whose effect is illustrated in the middle part. Our proposed conformal-invariant hyperelastic regulariser enforces volume presentation, controls changes in length and area, and ensures smoothness of deformation yielding to a better optimisation outcome.}
\label{FigOverview}
\vspace{0pt}
\end{figure*}
This section describes our novel image registration framework. It contains two key parts: 
i) the introduction and motivation of our new conformal-invariant hyperelastic-based regulariser to impose a physically relevant nature on the deformation, and ii) the details of our learning framework based on coordinate Multi-Layer Perceptron (MLP). 
The overall workflow framework is displayed in Fig. \ref{FigOverview}.

\subsection{Problem Statement}

In this work, we consider the problem of how to improve 3D image registration.  Let $\Omega$ be a convex bounded open subset of $\mathbb{R}^3$ (the image spatial domain) of class $\mathcal{C}^1$ thus satisfying the cone property meaning that there exists a finite cone $C$ such that each point $x\in \Omega$ is the vertex of a finite cone $C_x$ contained in $\Omega$ and congruent to $C$, and theoretically required to ensure Ball's results \citep{ball}. Let $\mathbf{I}_S:\Omega \to \mathbb{R}$ be the source image to be deformed, and $\mathbf{I_T}:\Omega \to \mathbb{R}$ be the target image to be fixed. The mapping $\mathbf{\Phi}: \bar{\Omega}\to\mathbb{R}^3$ denotes the sought non-rigid and non-parametric deformation aligning the deformed source image $\mathbf{I}_S(\mathbf{\Phi})$ with the target one $\mathbf{I}_T$, while $\nabla \mathbf{\Phi}: \bar{\Omega}\to M_3(\mathbb{R})$ is the Jacobian deformation with $M_3(\mathbb{R})$ the set of $3\times3$ matrices.

Since image registration is highly ill-posed, in an unsupervised deep learning framework in which we place ourselves, the design of the cost function is crucial. Following arguments from the variational setting, the sought deformation is obtained by minimising a loss function comprising two key terms. Firstly,
a similarity term that measures how close the deformed source image is to the target image in a sense depending on the application. Secondly, a regularisation term encoding the prescribed nature of the allowed deformations: 

\begin{equation}
    \hat{\mathbf{\Phi}} = \argmin_{\mathbf{\Phi} \in \mathcal{W}}\mathcal{L}(\mathbf{\Phi})=\mathcal{L}_{sim}(\mathbf{I}_S(\mathbf{\Phi}),\mathbf{I}_T)+\lambda\mathcal{L}_{reg}(\mathbf{\Phi}), 
    \label{loss_function}
\end{equation}

\noindent
$\lambda$ is a trade-off parameter between both terms. 

\subsection{A Novel Conformal-Invariant Regularisation}
Inspired by the theory of mechanics, the shapes to be matched are viewed as physical bodies subjected to forces and undergoing deformations. In this context, a deformation \citep{ciarlet1994three} is a smooth mapping that is orientation preserving and injective except possibly on the boundary $\partial \Omega$ if self-contact is allowed which mathematically translates into $\mathrm{det}\nabla \mathbf{\Phi}>0$ almost everywhere. 
From the variational perspective, it is well-understood that in order to obtain meaningful solutions, one needs to enforce the problem to be well-posed. The key to such is the regulariser.

The majority of learning-based image registration models leverage the empirical properties of a given architecture to approximate the well-posedness of the registration. Whilst they report great performance, they do not provide any guaranteed property yielding to topologically preserving transformations. 
Motivated by this drawback in the learning-based literature, we introduce a novel regulariser that guarantees homeomorphic transformations-- this is translated to have a model that enforces topology preservation yielding to clinically meaningful output. With the previous motivation in mind, our new model falls within the hyperelasticity setting allowing large and non-linear deformations \citep[ch. 4]{ciarlet1994three} while keeping an elastic behavior and computational performance. 

Our motivation to select this principle is that organs and several biological phenomena are characterised by hyperelasticity. We note that highly deformable materials such as rubber, filled elastomer, or biological tissues, are often modelled within this hyperelasticity setting. Unlike linear elasticity assuming small strains and the validity of Hooke's law (linear relation between stress and strains), the hyperelasticity theory predicates the existence of a stored energy density function differentiable with respect to the deformation in each direction and whose derivative gives the state of stress within the material in the same direction. The total deformation stored energy is given by the integral of this density over the whole body and by minimising this quantity \textit{we find a solution to the equilibrium problem} \citep{ledret2004}. 

In our model, the objects are considered as isotropic (the material deforms in the same way in every direction), homogeneous (meaning that the behavior is the same everywhere inside the material), and hyperelastic materials allowing large compressions and expansions while keeping a mechanical elastic behavior. This interpretation drives the construction of our new regularisation term therefore based on the stored energy of such materials prescribing then a physically-meaningful nature for the generated deformations. Our new regulariser reads:
\begin{eqnarray}
\mathcal{L}_{reg}(\mathbf{\Phi}) &=& \int_\Omega W(\nabla \mathbf{\Phi}(x),\mathrm{Cof}\nabla \mathbf{\Phi}(x),\mathrm{det}\nabla \mathbf{\Phi}(x))\,dx\\  \nonumber
&+& \mathds{1}_{\{\|.\|_{L^\infty(\Omega)}\leq \beta\}}(\mathrm{det}\nabla \Phi),  \label{eq:reg}
\end{eqnarray}

with $\beta>0$, and $W(\nabla \mathbf{\Phi},\mathrm{Cof}\nabla \mathbf{\Phi},\mathrm{det}\nabla \mathbf{\Phi}) = \left\{ \begin{array}{l} \frac{a_1\|\nabla \mathbf{\Phi}\|_F^9}{(\mathrm{det}\nabla \mathbf{\Phi})^3}+\frac{a_2\|\mathrm{Cof}\nabla \mathbf{\Phi}\|_F^6}{(\mathrm{det}\nabla \mathrm{\Phi})^4} + a_3(\mathrm{det}\nabla \mathbf{\Phi}-1)^2 + \text{...}\\ \text{...} \frac{a_4}{(\mathrm{det}\nabla \mathbf{\Phi})^\alpha} - 3^{\frac{9}{2}}a_1 - 3^3a_2-a_4 \text{ if } \mathrm{det}\nabla \mathbf{\Phi} >0 \\ +\infty  \text{ otherwise}\end{array} \right.$, \\

\noindent 
$\|A\|_F = \sqrt{\mathrm{Tr}(A^TA)} $ being the Frobenius norm, Cof denotes the cofactor matrix, and $\alpha>1$, $a_1>0$, $a_2>0$, $a_3>0$ and $a_4>0$ being hyper-parameters. {The first two terms are distorsion mappings and measure the deviation from conformality of the deformation and are conformally invariant meaning that any conformal change of variables does not alter their values \citep{adamowicz2007}}. Our regulariser is then seen as the unification of the hyperelasticity with the conformal and Beltrami-based approaches \citep{lam2014}. 

They also control the changes in length and area as well as the smoothness of the deformations. The third and fourth terms are added to control changes in volume that is promoting volume-preserving deformations by penalising deviations from one of the Jacobian determinant while preventing singularities by penalising small values of it. The last constants are added to impose $W(I_3,I_3,1)=0$ as required by the hyperelasticity setting. The stored energy density function $W$ is thus polyconvex and rotation-invariant by definition, and the deformations are searched in the following suitable functional space $\mathcal{W} = \{\varphi \in \mathrm{Id}+W^{1,9}_0(\Omega,\mathbb{R}^3)\, |  \, \frac{\|\nabla \varphi\|_F^9}{(\mathrm{det}\nabla \varphi)^3}\in L^1(\Omega), \frac{\|\mathrm{Cof}\nabla \varphi\|_F^6}{(\mathrm{det}\nabla \varphi)^4}\in L^1(\Omega),\mathrm{det}\nabla \varphi \in L^2(\Omega),\frac{1}{\mathrm{det}\nabla \varphi}\in L^\alpha(\Omega), \nabla \det \varphi >0 \text{ a.e. in } \Omega,\,\|\mathrm{det}\nabla \varphi\|_{L^\infty(\Omega)}\leq \beta \}$, that is the deformation is assumed to be equal to the identity on the boundary $\partial \Omega$.
\begin{remark}
This assumption is lifted in  practice since the computation of the deformation is done only in the lung mask inside the image spatial domain $\Omega$ as explained in the following section.
\end{remark}
\begin{remark}
    The $L^\infty$ penalisation on the Jacobian determinant is added only for theoretical purposes. In practice, $\beta$ is taken as the maximum value allowed in computers and the penalisation is lifted in the numerical algorithm.
\end{remark}
Since $9>4$ and 
\begin{align}
 &\int_\Omega \|(\nabla \mathbf{\Phi})^{-1}(x)\|^4\mathrm{det}\nabla \mathbf{\Phi}(x)\,dx \\  \label{eq:hom} \nonumber
&= \int_\Omega \|\frac{1}{\mathrm{det}\nabla \mathbf{\Phi}(x)}\mathrm{Cof}\nabla \mathbf{\Phi}(x)^T \|^4 \mathrm{det}\nabla \mathbf{\Phi}(x)\,dx,\\ \nonumber
&=\int_\Omega \frac{1}{(\mathrm{det}\nabla \mathbf{\Phi}(x))^3}\|\mathrm{Cof}\nabla \mathbf{\Phi}(x)\|^4\,dx,\\ \nonumber
&=\int_\Omega \frac{\|\mathrm{Cof}\nabla \mathbf{\Phi}(x)\|^4}{(\mathrm{det}\nabla \mathbf{\Phi})^{\frac{8}{3}}}\frac{1}{(\mathrm{det}\nabla \mathbf{\Phi}(x))^{\frac{1}{3}}}\,dx,\\ \nonumber
&\leq \Bigg(\int_\Omega\frac{\|\mathrm{Cof}\nabla \mathbf{\Phi}(x)\|^6}{(\mathrm{det}\nabla \mathbf{\Phi}(x))^4}\,dx\Bigg)^{\frac{2}{3}}\Bigg(\int_\Omega\frac{1}{\mathrm{det}\nabla \mathbf{\Phi}(x)}\,dx\Bigg)^{\frac{1}{3}} \\ \nonumber
&\text{ by Holder's inequality with $p=\frac{3}{2}$ and $q=3$},
\end{align}
then Ball's results apply \citep{ball} and the deformations are homeomorphisms with the inverse deformation $\mathbf{\Phi}^{-1}\in W^{1,4}(\Omega,\mathbb{R}^3)$.

Regarding $\mathcal{L}_{sim}$ from~\eqref{loss_function}, in this work, we take the similarity term as the classical normalised crossed-correlation between the deformed source image and the target one:
\begin{align*}
    &\mathcal{L}_{sim}(\mathbf{I}_S(\mathbf{\Phi}),\mathbf{I}_T) = -NCC(\mathbf{I}_S(\mathbf{\Phi}),\mathbf{I}_T) \\
    &=-\frac{\langle \mathbf{I}_S\circ\Phi,\mathbf{I}_R\rangle_{L^2(\Omega)}}{\|\mathbf{I}_S\circ\Phi\|_{L^2(\Omega)}\|\mathbf{I}_R\|_{L^2(\Omega)}}\\
    &= -\sum_{p\in \Omega}\frac{(\sum_{p_i\in w}(\mathbf{I}_S(\mathbf{\Phi}(p_i))-\bar{\mathbf{I}}_S(\mathbf{\Phi}(p)))(\mathbf{I}_T(p_i)-\bar{\mathbf{I}}_T(p)))^2}{\sum_{p_i\in w}(\mathbf{I}_S(\mathbf{\Phi}(p_i))-\bar{\mathbf{I}}_S(\mathbf{\Phi}(p)))^2\sum_{p_i\in w}(\mathbf{I}_T(p_i)-\bar{\mathbf{I}}_T(p))^2},
\end{align*}
with $w$ being a local window of size $n^3$ around the current pixel $p$, and $\bar{\mathbf{I}}_S(\mathbf{\Phi})(p)$ and $\bar{\mathbf{I}}_T(p)$ are the mean intensity of that local window.

\subsection{Theoretical results}
\begin{theorem}[Existence of minimisers]
Problem \eqref{loss_function} admits at least one minimiser in $\mathcal{W}$.
\end{theorem}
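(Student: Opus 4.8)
\emph{Proof sketch.} I would run the direct method of the calculus of variations, exploiting the fact (asserted in the construction above) that $W$, viewed as a function of the triple $(\nabla\mathbf{\Phi},\mathrm{Cof}\nabla\mathbf{\Phi},\det\nabla\mathbf{\Phi})$ extended by $+\infty$ where the determinant is nonpositive, is polyconvex and bounded below, in the spirit of Ball \citep{ball}; the extra care goes into keeping the weak limit inside $\mathcal{W}$ and into passing to the limit in the non-convex similarity term. First I would check that the infimum is finite: for $\mathbf{\Phi}=\mathrm{Id}$ one has $\nabla\mathbf{\Phi}=\mathrm{Cof}\nabla\mathbf{\Phi}=I_3$ and $\det\nabla\mathbf{\Phi}=1$, so $W(I_3,I_3,1)=0$ by the choice of the additive constants, the $L^\infty$ constraint holds for $\beta\ge 1$, and $\mathcal{L}_{sim}$ is bounded since each local term of $NCC$ lies in $[0,1]$ by Cauchy--Schwarz; hence $\inf_{\mathcal{W}}\mathcal{L}<+\infty$. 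Taking a minimising sequence $(\mathbf{\Phi}_k)_k\subset\mathcal{W}$, and using that $\mathcal{L}_{sim}\ge -C$ and that every summand of $W$ other than the additive constants is nonnegative, the integrals $\int_\Omega\frac{a_1\|\nabla\mathbf{\Phi}_k\|_F^9}{(\det\nabla\mathbf{\Phi}_k)^3}$, $\int_\Omega\frac{a_2\|\mathrm{Cof}\nabla\mathbf{\Phi}_k\|_F^6}{(\det\nabla\mathbf{\Phi}_k)^4}$, $\int_\Omega(\det\nabla\mathbf{\Phi}_k-1)^2$ and $\int_\Omega\frac{a_4}{(\det\nabla\mathbf{\Phi}_k)^\alpha}$ are all bounded; combined with $\det\nabla\mathbf{\Phi}_k\le\beta$ a.e. and the Poincaré inequality on $\mathrm{Id}+W^{1,9}_0$, this yields uniform bounds on $\|\mathbf{\Phi}_k\|_{W^{1,9}}$, on $\mathrm{Cof}\nabla\mathbf{\Phi}_k$ in $L^6$, on $\det\nabla\mathbf{\Phi}_k$ in $L^2\cap L^\infty$, and on $1/\det\nabla\mathbf{\Phi}_k$ in $L^\alpha$.

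Next I would extract compactness. Up to a subsequence $\mathbf{\Phi}_k\rightharpoonup\mathbf{\Phi}$ in $W^{1,9}(\Omega,\mathbb{R}^3)$, and since $9>3=\dim\Omega$ the compact embedding $W^{1,9}(\Omega)\hookrightarrow\hookrightarrow\mathcal{C}^0(\bar\Omega)$ (valid under the cone property) gives $\mathbf{\Phi}_k\to\mathbf{\Phi}$ uniformly on $\bar\Omega$, so in particular $\mathbf{\Phi}=\mathrm{Id}$ on $\partial\Omega$. By the weak continuity of minors \citep{ball}, $\mathrm{Cof}\nabla\mathbf{\Phi}_k\rightharpoonup\mathrm{Cof}\nabla\mathbf{\Phi}$ and $\det\nabla\mathbf{\Phi}_k\rightharpoonup\det\nabla\mathbf{\Phi}$; together with the Step-1 bounds and uniqueness of weak limits these convergences improve to $L^6$ and to $L^2$ (resp. weak-$*$ in $L^\infty$), so $\mathrm{Cof}\nabla\mathbf{\Phi}\in L^6$, $\det\nabla\mathbf{\Phi}\in L^2$ and $\|\det\nabla\mathbf{\Phi}\|_{L^\infty(\Omega)}\le\beta$. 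To recover $\det\nabla\mathbf{\Phi}>0$ a.e. I would use that $t\mapsto a_4 t^{-\alpha}$, extended by $+\infty$ on $(-\infty,0]$, is convex and bounded below, hence $v\mapsto\int_\Omega a_4 v^{-\alpha}$ is weakly lower semicontinuous on $L^2(\Omega)$; therefore $\int_\Omega(\det\nabla\mathbf{\Phi})^{-\alpha}\le\liminf_k\int_\Omega(\det\nabla\mathbf{\Phi}_k)^{-\alpha}<+\infty$, which forces $\det\nabla\mathbf{\Phi}>0$ a.e. and $1/\det\nabla\mathbf{\Phi}\in L^\alpha$.

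Then I would invoke lower semicontinuity: since $W$ is polyconvex and bounded below, Ball's theorem \citep{ball} applied with the three weak convergences above yields $\int_\Omega W(\nabla\mathbf{\Phi},\mathrm{Cof}\nabla\mathbf{\Phi},\det\nabla\mathbf{\Phi})\le\liminf_k\int_\Omega W(\nabla\mathbf{\Phi}_k,\mathrm{Cof}\nabla\mathbf{\Phi}_k,\det\nabla\mathbf{\Phi}_k)$, and the same argument term by term shows that the remaining $L^1$-integrability constraints defining $\mathcal{W}$ pass to the limit, so $\mathbf{\Phi}\in\mathcal{W}$. For the similarity term, uniform convergence $\mathbf{\Phi}_k\to\mathbf{\Phi}$ together with continuity and boundedness of $\mathbf{I}_S$ gives $\mathbf{I}_S\circ\mathbf{\Phi}_k\to\mathbf{I}_S\circ\mathbf{\Phi}$ in $L^2(\Omega)$ by dominated convergence, whence $\mathcal{L}_{sim}(\mathbf{I}_S(\mathbf{\Phi}_k),\mathbf{I}_T)\to\mathcal{L}_{sim}(\mathbf{I}_S(\mathbf{\Phi}),\mathbf{I}_T)$ (the $NCC$ denominators being bounded away from $0$, or the sum over pixels being manifestly continuous). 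Adding the two contributions, $\mathcal{L}(\mathbf{\Phi})\le\liminf_k\mathcal{L}(\mathbf{\Phi}_k)=\inf_{\mathcal{W}}\mathcal{L}$, and since $\mathbf{\Phi}\in\mathcal{W}$, $\mathbf{\Phi}$ is a minimiser.

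The main obstacle is the identification of the limit in $\mathcal{W}$: weak $W^{1,9}$-convergence does not by itself preserve the pointwise positivity of the Jacobian determinant nor the integrability of $1/\det\nabla\mathbf{\Phi}$, and it is precisely the coercive penalisation $a_4/(\det\nabla\mathbf{\Phi})^\alpha$ with $\alpha>1$, combined with convex lower semicontinuity, that closes this gap. A secondary difficulty is the non-convexity of $\mathcal{L}_{sim}$, circumvented by upgrading weak convergence to uniform convergence through the embedding $W^{1,9}\hookrightarrow\hookrightarrow\mathcal{C}^0$, which is available only because $9>\dim\Omega$; the weak continuity of the cofactor and determinant, by contrast, is classical given the $W^{1,9}$ bound.
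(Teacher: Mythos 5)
Your proposal is correct and follows essentially the same route as the paper's proof: the direct method with a coercivity bound at a minimising sequence, weak compactness of $(\mathbf{\Phi}_k,\mathrm{Cof}\nabla\mathbf{\Phi}_k,\det\nabla\mathbf{\Phi}_k)$, weak continuity of minors, polyconvexity-based lower semicontinuity of $\int_\Omega W$, recovery of $\det\nabla\bar{\mathbf{\Phi}}>0$ a.e.\ from the blow-up of $W$ on nonpositive determinants, and pointwise convergence of $\mathbf{I}_S\circ\mathbf{\Phi}_k$ plus dominated convergence for the $NCC$ term. The only differences are cosmetic (compact embedding $W^{1,9}\hookrightarrow\mathcal{C}^0(\bar\Omega)$ versus trace continuity for the boundary condition, and Ball's lower semicontinuity theorem versus Fatou's lemma combined with Br\'ezis' Corollaire III.8).
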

\begin{proof}
    The proof follows the arguments of classical direct methods of the calculus of variations. We first derive a coercivity inequality. Using the fact that $(a-b)^2 \geq \frac{1}{2}a^2-b^2$, one has : 
    \begin{align*}
        \mathcal{L}(\Phi) &\geq \frac{a_1}{\beta^3}\|\nabla \Phi\|_{L^9(\Omega,M_3(\mathbb{R}))}^9 + \frac{a_2}{\beta^4}\|\mathrm{Cof}\nabla \Phi\|_{L^6(\Omega,M_3(\mathbb{R}))}^6\\
        &+ \frac{a_3}{2}\|\mathrm{det}\nabla \Phi\|_{L^2(\Omega)}^2-a_3\mathrm{meas}(\Omega)+a_4\|\frac{1}{\mathrm{det}\nabla \Phi}\|_{L^\alpha(\Omega)}^\alpha\\
        &-3^{\frac{9}{2}}a_1\mathrm{meas}(\Omega)-3^2a_2\mathrm{meas}(\Omega)-a_4\mathrm{meas}(\Omega) \\
        &+ \mathds{1}_{\{\|.\|_{L^\infty(\Omega)}\leq \beta\}}(\mathrm{det}\nabla \Phi),
    \end{align*}
    with $\mathrm{meas}$ standing for the measure of the space.
    The quantity $\mathcal{L}(\Phi)$ is thus bounded below and as for $\Phi=\mathrm{Id}$, the identity map, and suitable $\beta$, $\mathcal{L}(\Phi)=-NCC(\mathbf{I}_S,\mathbf{I}_T)$ is finite with $\mathbf{I}_S$ and $\mathbf{I}_T$ continuous. The infimum is thus finite.\\
    Let then $(\Phi_k)_k \in \mathcal{W}$ be a minimising sequence, i.e. $\underset{k\rightarrow +\infty}{\lim}\mathcal{L}(\Phi_k)=\underset{\Phi\in\mathcal{W}}{\inf} \mathcal{L}(\Phi)$. Hence there exists $K\in \mathbb{N}$ such that $\forall k \in \mathbb{N}$, $k\geq K$ implies $\mathcal{L}(\Phi_k)\leq \underset{\Phi\in\mathcal{W}}{\inf}\mathcal{L}(\Phi)+1$. From now on, we assume that $k \geq K$. According to the coercivity inequality, one gets :
    \begin{itemize}
        \item $(\Phi_k)$ is uniformly bounded according to $k$ in  $W^{1,9}(\Omega,\mathbb{R}^3)$ using the generalised Poincar\'e inequality and the fact that $\Phi=\mathrm{Id}$ on $\partial \Omega$. 
        \item $(\mathrm{Cof}\nabla \Phi_k)$ is uniformly bounded according to $k$ in $L^6(\Omega,M_3(\mathbb{R}))$.
        \item $(\mathrm{det}\nabla \Phi_k)$ is uniformly bounded according to $k$ in $L^\infty(\Omega)$ and in $L^2(\Omega)$.
        \item $(\frac{1}{\mathrm{det}\nabla \Phi_k})$ is uniformly bounded according to $k$ in $L^\alpha(\Omega)$.
    \end{itemize}
    Thus there exist a subsequence -- still denoted by $(\Phi_k)$ -- and $\bar{\Phi}\in W^{1,9}(\Omega,\mathbb{R}^3)$ such that $\Phi_k \underset{k\rightarrow +\infty}{\rightharpoonup} \bar{\Phi}$ in $W^{1,9}(\Omega,\mathbb{R}^3)$. Moreover, there exists a subsequence (common with the previous one which is always possible) -- still denoted by $(\mathrm{Cof}\nabla \Phi_k)$ --  and $\gamma\in L^6(\Omega,M_3(\mathbb{R}))$ such that $\mathrm{Cof}\nabla \Phi_k \underset{k\rightarrow +\infty}{\rightharpoonup} \gamma $ in $L^6(\Omega,M_3(\mathbb{R}))$. Furthermore, there exists a subsequence also common with previous ones which is always possible -- still denoted by $(\mathrm{det}\nabla \Phi_k)$-- and $\delta \in L^2(\Omega)$ such that $\mathrm{det}\nabla \Phi_k \underset{k\rightarrow+\infty}{\rightharpoonup}\delta$ in $L^2(\Omega)$.
    
    By applying Theorem 8.20 from \citep{dacorogna}, 
    then $\mathrm{Cof}\nabla \Phi_k\underset{k\rightarrow +\infty}{\rightharpoonup}\mathrm{Cof}\nabla \bar{\Phi}$ in $L^{\frac{9}{2}}(\Omega,M_3(\mathbb{R}))$, and by uniqueness of the weak limit as well as the continuous embedding of $L^p$-spaces, we have that $\gamma = \mathrm{Cof}\nabla \bar{\Phi}$ in $L^6(\Omega,M_3(\mathbb{R}))$ and thus $\mathrm{Cof}\nabla \Phi_k\underset{k\rightarrow +\infty}{\rightharpoonup}\mathrm{Cof}\nabla \bar{\Phi}$ in $L^6(\Omega,M_3(\mathbb{R}))$. We also get that $\mathrm{det}\nabla \Phi_k \underset{k\rightarrow +\infty}{\rightharpoonup} \mathrm{det}\nabla \bar{\Phi}$ in $L^3(\Omega)$ and by uniqueness of the weak limit as well as the continuous embedding of $L^p$-spaces, we get that $\delta = \mathrm{det}\nabla \bar{\Phi}$ in $L^2(\Omega) $ and therefore $\mathrm{det}\nabla \Phi_k\underset{k\rightarrow +\infty}{\rightharpoonup}\mathrm{det}\nabla \bar{\Phi}$ in $L^2(\Omega)$ and $\mathrm{det}\nabla \Phi_k \underset{k\rightarrow+\infty}{\overset{*}{\rightharpoonup}}\mathrm{det}\bar{\Phi}$ in $L^\infty(\Omega)$. At last, by continuity of the trace operator, we have that $\bar{\Phi}\in \mathrm{Id}+W^{1,9}_0(\Omega,\mathbb{R}^3)$. The stored energy function $W$ is continuous and convex: if $\psi_n\underset{n\rightarrow +\infty}{\longrightarrow}\bar{\psi}$ in $W^{1,9}(\Omega,\mathbb{R}^3)$, thus $\nabla \psi_n \underset{n\rightarrow+\infty}{\longrightarrow}\nabla \bar{\psi}$ in $L^9(\Omega,M_3(\mathbb{R}))$ and one can extract a subsequence still denoted by $(\nabla \psi_n)$ such that $\nabla \psi_n\underset{n\rightarrow+\infty}
    {\longrightarrow}\nabla \bar{\psi}$ almost everywhere in $\Omega$. 
    
    Similarly, if $K_n\underset{n\rightarrow+\infty}{\longrightarrow}\bar{K}$ in $L^6(\Omega,M_3(\mathbb{R}))$ then one can extract a common subsequence still denoted by $(K_n)$ such that $K_n\underset{n\rightarrow+\infty}{\longrightarrow}\bar{K}$ almost everywhere in $\Omega$. Finally, if $\delta_n\underset{n\rightarrow+\infty}{\longrightarrow} \bar{\delta}$ in $L^2(\Omega)$ then one can extract 
    a common subsequence still denoted by $(\delta_n)$ such that $\delta_n\underset{n\rightarrow+\infty}{\longrightarrow} \bar{\delta}$ almost everywhere in $\Omega$. Then, by continuity of $W$, one gets that $W_(\nabla \psi_n,K_n,\delta_n)\underset{n\rightarrow+\infty}{\longrightarrow}W(\nabla \bar{\psi},\bar{K},\bar{\delta})$ almost everywhere in $\Omega$. Then applying Fatou's lemma yields $\int_\Omega W(\nabla \bar{\psi},\bar{K},\bar{\delta})\,dx\leq \underset{k\rightarrow +\infty}{\lim \inf} \int_{\Omega}W(\nabla \psi_n,K_n,\delta_n)\,dx $. 
    As $W$ is convex, so is $\int_\Omega W(\xi,K,\delta)\,dx$, and involving Br\'ezis Corollaire III.8 \citep{brezis}
    leads to: $\int_\Omega W(\nabla \bar{\Phi},\mathrm{Cof}\nabla\bar{\Phi},\mathrm{det}\nabla \bar{\Phi})\,dx\leq \underset{k\rightarrow+\infty}{\lim\inf} \int_\Omega W(\nabla \Phi_k,\mathrm{Cof}\nabla \Phi_k,\mathrm{det}\nabla \Phi_k)\,dx<+\infty $.
    Since $W(\nabla \bar\Phi,\mathrm{Cof}\nabla \bar{\Phi},\mathrm{det}\nabla \bar{\Phi}) = +\infty$ when $\mathrm{det}\nabla \bar{\Phi}(x)\leq 0$ , the set on which it occurs is necessarily of null measure otherwise we would have $\mathcal{L}(\bar{\Phi})=+\infty$. So $\mathrm{det}\nabla \bar{\Phi})>0$ almost everywhere on $\Omega$. Besides, for all $k\geq K$, $\int_\Omega\|(\nabla \Phi_k)^{-1}\|_F^4\mathrm{det}\nabla \Phi_k\,dx\leq C$.
    $C=C(\alpha,\beta,\Omega)>0$ being a constant depending only on $\alpha, \beta$ and $\Omega$. The assumptions of Ball's results thus hold \citep{ball} and it yields that $(\Phi_k$ are homeomorphisms from $\bar{\Omega}$ to $\bar{\Omega}$ and $\Phi_k^{-1}\in W^{1,4}(\Omega,\mathbb{R}^3)$.
    By the weak-$*$ lower semi-continuity of $\|.\|_{L^\infty(\Omega)}$, we deduce that $\|\mathrm{det}\nabla \bar{\Phi}\|_{L^\infty(\Omega)}\leq \underset{k\rightarrow+\infty}{\lim\inf}\|\mathrm{det}\nabla \Phi_k\|_{L^\infty(\Omega)}\leq \beta$ and $\mathds{1}_{\{\|.\|_{L^\infty(\Omega)}\leq \beta\}}(\nabla \bar{\Phi})\leq \underset{k\rightarrow+\infty}{\lim \inf}\mathds{1}_{\{\|.\|_{L^\infty(\Omega)}\leq \beta\}}(\nabla \Phi_k)  $. Also, $\int_\Omega \|(\nabla \bar{\Phi})^{-1}\|^4\mathrm{det}\nabla \bar \Phi\,dx <\infty$ following the same calculus as previously. Therefore $\bar{\Phi}$ is also a homeomorphim from $\bar{\Omega}$ to $\bar{\Omega}$. 
    Since $\mathbf{I}_S\in L^2(\Omega)\cap \mathcal{C}^0(\Omega)$, we have that $\mathbf{I}_S\circ\Phi_k \underset{k\rightarrow+\infty}{\longrightarrow}\mathbf{I}_S\circ\bar{\Phi}$ almost everywhere in $\Omega$ and thus $NCC(\mathbf{I}_S\circ\Phi_k,\mathbf{I}_R)\underset{k\rightarrow+\infty}{\longrightarrow}NCC(\mathbf{I}_S\circ\bar{\Phi},\mathbf{I}_R)$ by using Lebesgue theorem and assuming both $\mathbf{I}_S$ and $\mathbf{I}_R$ are bounded. We thus have proved that $\mathcal{L}(\bar{\Phi})\leq \underset{k\rightarrow+\infty}{\lim\inf}\mathcal{L}(\Phi_k)=\underset{\Phi\in\mathcal{W}}{\inf}\mathcal{L}(\Phi)<+\infty$ with $\bar{\Phi}\in \mathcal{W}$.
\end{proof}
\subsection{Learning Transformations via Coordinate MLP} 

In our pair-wise registration framework, we employ a coordinate MLP to continuously represent the transformation. Unlike traditional methods that directly operate on image intensities, we use the spatial coordinates $(x, y, z)$ within the lung area as inputs to the MLP. 
In our lung CT registration task, due to sophisticated structures in the lungs, we expect that the coordinate MLP can represent high-frequency information, which means local small deformations.
To achieve this, we adopt the strategy inspired by SIREN \citep{sitzmann2020implicit}, which utilizes periodic sinusoidal functions to model high-frequency content effectively. 
The SIREN strategy is based on the observation that neural networks with sinusoidal activation functions can approximate high-frequency signals more efficiently than traditional activation functions such as ReLU or Sigmoid. By using sinusoidal activations, the network can better represent fine-grained details and capture high-frequency variations.

Take a 3D position $p\in \mathbb{R}^3$ in $\mathbf{I}_T$ as input, set $p'\in \mathbb{R}^3 $ as corresponding position in the coordinate space of $\mathbf{I}_S$.
The deformation between $p$ and $p'$ is parameterised as following:
\begin{equation}
    \mathbf{\Phi}(p)=sin(\omega(W_ip+b_i)),
\end{equation}
where $\omega$ is the hyperparameter that regulates
the spectral bias of the network, we set it as 32 for all our experiments.
In our setting, we use a 4-layer MLP for the COPD dataset and a 3-layer MLP for 4DCT dataset with 256 hidden units.


\section{Experiments and Results}
This section describes in details the range of experiments conducted to evaluate our proposed framework.

\subsection{Experimental Setup}
In our experimental setup, we introduce the datasets utilised and present our comprehensive implementation details.

\subsubsection{Datasets Description} We evaluate our framework on  inspiration-to-expiration lung CT registration task using two publicly available dataset: DIRLab COPD~\citep{castillo2013reference} and DIRLab 4DCT~\citep{castillo2009framework}.
Both the two datasets are composed of lung CT images characterising the inspiration-to-expiration motion of the lung from 10 patients. They also
provide 300 anatomical landmarks (serve as ground truth for the evaluation) for images that at the inspiration phase and at the expiration phase.
More details please refer to the provided website link\footnote{The dataset is released at the website \url{https://med.emory.edu/departments/radiation-oncology/research-laboratories/deformable-image-registration/index.html}}.

\begin{table*}[!h]\small
\renewcommand\arraystretch{1}
\centering
\caption{Numerical comparison of our proposed framework vs. other existing image registration techniques. The numerical values reflect the TRE metric $(mm)$ for the DIRLab COPD dataset. The best results are highlighted in green colour.}
\begin{center}
\begin{tabular}{p{0.45in} p{0.55in}p{0.55in}
                          p{0.55in}p{0.55in}
                          p{0.645in}p{0.55in}
                          p{0.5in}}
\toprule
\rowcolor[HTML]{EFEFEF}
\ COPD &{\ \ \ $Init.$}&{\ FE}&{\ PDD}&{\ \  VM }
  &{LapIRN} &{ \ \ INR}&{\ \ \ Ours}  \\
\midrule
{\ \ \ \ 01} &{\ \ \ 26.33}  &{\ \ \ 4.89} &{\ \ \ 2.57} &{\ \ \ 9.95} &{\ \ \ \ 6.85} &{\ \ \ 2.53} &{\ \ \ \cellcolor[HTML]{D7FFD7}\textbf{1.66}} \\

{\ \ \ \ 02} &{\ \ \ 21.79}  &{\ \ \ 7.30} &{\ \ \ 4.01} &{\ \ \ 9.96} &{\ \ \ \ 6.90} &{\ \ \ 5.78} &{\ \ \ \cellcolor[HTML]{D7FFD7}\textbf{3.70}}\\

{\ \ \ \ 03} &{\ \ \ 12.64}  &{\ \ \ 2.89} &{\ \ \ 1.46} &{\ \ \ 4.41} &{\ \ \ \ 1.51} &\cellcolor[HTML]{D7FFD7}{\ \ \ \textbf{1.28}} &{\cellcolor[HTML]{D7FFD7}\ \ \ \textbf{1.28}}\\

{\ \ \ \ 04} &{\ \ \ 29.58}  &{\ \ \ 5.46} &{\ \ \ 2.19} &{\ \ \ 7.08} &{\ \ \ \ 6.38} &{\ \ \ 2.34} &\cellcolor[HTML]{D7FFD7}{\ \ \ \textbf{1.62}}\\

{\ \ \ \ 05} &{\ \ \ 30.08}  &{\ \ \ 5.19} &{\ \ \ 2.22} &{\ \ \ 9.19} &{\ \ \ \ 6.81} &{\ \ \ 3.09} &{\ \ \ \cellcolor[HTML]{D7FFD7}\textbf{1.47}}\\

{\ \ \ \ 06} &{\ \ \ 28.46}  &{\ \ \ 5.53} &{\ \ \ 1.89} &{\ \ \ 8.12} &{\ \ \ \ 4.19} &{\ \ \ 2.66} &{\ \ \ \cellcolor[HTML]{D7FFD7}\textbf{1.86}}\\

{\ \ \ \ 07} &{\ \ \ 21.60}  &{\ \ \ 4.40} &{\ \ \ 1.62} &{\ \ \ 7.10} &{\ \ \ \ 2.73} &{\ \ \ 1.27} &{\ \ \ \cellcolor[HTML]{D7FFD7}\textbf{1.20}}\\

{\ \ \ \ 08} &{\ \ \ 26.46}  &{\ \ \ 3.94} &{\ \ \ 1.72} &{\ \ \ 7.92} &{\ \ \ \ 4.32} &{\ \ \ 2.75} &{\ \ \ \cellcolor[HTML]{D7FFD7}\textbf{1.65}}\\

{\ \ \ \ 09} &{\ \ \ 14.86}  &{\ \ \ 3.57} &{\ \ \ 1.51} &{\ \ \ 6.93} &{\ \ \ \ 3.60} &{\ \ \ 1.40} &{\ \ \ \cellcolor[HTML]{D7FFD7}\textbf{1.30}}\\

{\ \ \ \ 10} &{\ \ \ 21.81}  &{\ \ \ 4.44} &{\ \ \ 2.43} &{\ \ \ 9.16} &{\ \ \ \ 6.59} &{\ \ \ 3.25} &{\ \ \ \cellcolor[HTML]{D7FFD7}\textbf{1.69}}\\

\midrule
{\ \ \ $Avg.$}&{\ \ \ 23.36}&{\ \ \ 4.76}&{\ \ \ 2.16} &{\ \ \ 7.98} &{\ \ \ \ 4.99} &{\ \ \ 2.64} &{\ \ \ \cellcolor[HTML]{D7FFD7}\textbf{1.74}} \\

\bottomrule
\end{tabular}
\end{center}
\label{TableCOPD}
\vspace{-16pt}
\end{table*}
\vspace{0cm}
\begin{table*}[!t]\small
\renewcommand\arraystretch{1}
\centering
\caption{Numerical comparison of our proposed framework vs. other existing image registration techniques. The numerical values reflect the TRE metric $(mm)$ for the DIRLab 4DCT dataset. The best results are highlighted in green colour.}
\begin{center}
\begin{tabular}{p{0.45in} p{0.55in}p{0.55in}
                          p{0.55in}p{0.55in}
                          p{0.645in}p{0.55in}
                          p{0.5in}}
                          
\toprule
\rowcolor[HTML]{EFEFEF}
\ 4DCT &{\ \ \ $Init.$}&{\ \ FE}&{\ PDD}&{\ \   VM }
  &{LapIRN} &{ \ \ INR}&{\ \ Ours} \\
\midrule
{\ \ \ \ 01} &{\ \ \ 3.89}  &{\ \ \ 2.20} &{\ \ \ 0.90} &{\ \ \ 1.46} &{\ \ \ \ 1.00} &\cellcolor[HTML]{D7FFD7}{\ \ \ \textbf{0.76}} &{\ \ \ 0.78} \\

{\ \ \ \ 02} &{\ \ \ 4.34}  &{\ \ \ 3.89} &{\ \ \ 0.91} &{\ \ \ 1.51} &{\ \ \ \ 1.28} &\cellcolor[HTML]{D7FFD7}{\ \ \ \textbf{0.76}} &{\ \ \ 0.77}  \\

{\ \ \ \ 03} &{\ \ \ 6.94}  &{\ \ \ 2.71} &{\ \ \ 1.06} &{\ \ \ 2.31} &{\ \ \ \ 2.18} &\cellcolor[HTML]{D7FFD7}\ \ \ \textbf{0.94} &\cellcolor[HTML]{D7FFD7}{\ \ \ \textbf{0.94}}  \\

{\ \ \ \ 04} &{\ \ \ 9.83}  &{\ \ \ 2.95} &{\ \ \ 1.66} &{\ \ \ 2.72} &{\ \ \ \ 3.05} &\cellcolor[HTML]{D7FFD7}{\ \ \ \textbf{1.32}} &{\ \ \ 1.36}  \\

{\ \ \ \ 05} &{\ \ \ 7.48}  &{\ \ \ 3.03} &{\ \ \ 1.68} &{\ \ \ 2.69} &{\ \ \ \ 2.36} &{\ \ \ 1.23} &\cellcolor[HTML]{D7FFD7}{\ \ \ \textbf{1.20}}  \\

{\ \ \ \ 06} &{\ \ \ 10.89} &{\ \ \ 3.36} &{\ \ \ 1.86} &{\ \ \ 3.07} &{\ \ \ \ 1.78} &{\ \ \ 1.09} &\cellcolor[HTML]{D7FFD7}{\ \ \ \textbf{1.06}}  \\

{\ \ \ \ 07} &{\ \ \ 11.03} &{\ \ \ 3.10} &{\ \ \ 1.94} &{\ \ \ 3.01} &{\ \ \ \ 2.24} &{\ \ \ 1.12} &\cellcolor[HTML]{D7FFD7}{\ \ \ \textbf{0.97}}  \\

{\ \ \ \ 08} &{\ \ \ 14.99} &{\ \ \ 2.94} &{\ \ \ 1.79} &{\ \ \ 6.22} &{\ \ \ \ 2.24} &{\ \ \ 1.21} &\cellcolor[HTML]{D7FFD7}{\ \ \ \textbf{1.12}}  \\

{\ \ \ \ 09} &{\ \ \ 7.92}  &{\ \ \ 2.86} &{\ \ \ 1.94} &{\ \ \ 2.94} &{\ \ \ \ 2.26} &{\ \ \ 1.22} &\cellcolor[HTML]{D7FFD7}{\ \ \ \textbf{1.08}}  \\

{\ \ \ \ 10} &{\ \ \ 7.30}  &{\ \ \ 2.99} &{\ \ \ 2.03} &{\ \ \ 3.00} &{\ \ \ \ 1.90} &\cellcolor[HTML]{D7FFD7}{\ \ \ \textbf{1.01}} &{\ \ \ 1.04}  \\
\midrule
{\ \ \ $Avg.$}&{\ \ \ 8.46} &{\ \ \ 3.00} &{\ \ \ 1.57} &{\ \ \ 2.89} &{\ \ \ \ 2.03} &{\ \ \ 1.07} &\cellcolor[HTML]{D7FFD7}{\ \ \ \textbf{1.03}}  \\

\bottomrule
\end{tabular}
\end{center}
\label{Table4DCT}
\vspace{-16pt}
\end{table*}

\subsubsection{Evaluation Protocol.}
We follow the standard protocol for evaluating our framework using the target registration error (TRE) as the performance metric.
The TRE is defined as the distance between a set of manually identified corresponding points, typically referred to as landmarks, in the registered image and the corresponding points in the target image. 
In this paper, the 300 corresponding landmarks provided in the dataset are used to calculate the TRE.
The lowest the TRE the better the registration output.

\subsubsection{Implementation Details.} 
Our method was implemented using PyTorch, and utilised an NVIDIA A100 GPU for computation.
During the training stage, each image pair is trained for 6000 epochs for the COPD dataset, and 3000 epochs for the 4DCT dataset. At each epoch, we randomly sample 15000 points, from the masked image, with only lung information for the COPD dataset (10000 points for the 4DCT dataset). We optimise the network using the Adam optimiser with a fixed learning rate of  1×10$^{-5}$. 
The total time required to register one pair of 3D volumes is 1.7 minutes for COPD dataset and 1.1 minutes for the 4DCT dataset. 
The registration is finished in one-shot without any prior affine registration.

\subsection{Experimental Results}
For evaluating our proposed method, we compare it against five competitive methods for lung CT registration, which is a challenging task due to the superposition of respiratory and cardiac motion. 

\subsubsection{Performance Improvement} 
In Table \ref{TableCOPD}, we report a set of quantitative comparisons in terms of TRE $(mm)$  on the DIRLab COPD dataset between our method and other state-of-the-art methods: 1) FE~\citep{liu2019flownet3d}, 2) PDD~\citep{heinrich2019closing}, 3) VM~\citep{balakrishnan2019voxelmorph}, 4) LapIRN~\citep{mok2020large}, and 5) INR~\citep{wolterink2022implicit}. The COPD dataset has extremely large deformation with an average initial displacement of 23.36 $mm$. The initial TRE of all patients are listed in the second column of the table, denoted as $Init.$.
In a closer look at the results, we observe that our method achieved the lowest TRE in all COPD data pairs, with an average value of 1.74 $mm$. We outperform the second-best method INR by 0.42 $mm$. While other methods have much higher average TRE values ranging from 2.16 $mm$ to 7.98 $mm$. Similarly, in Table \ref{Table4DCT}, we report TRE results on the 4DCT dataset, our method also achieved the lowest average TRE value compared to other methods, and achieved the lowest TRE value on 6 image pairs.

\begin{figure*}[!ht]
\centering
\setlength{\abovecaptionskip}{0pt}
\setlength{\belowcaptionskip}{-2pt}
\subfigure[] {\ \ \includegraphics [width=0.47\linewidth]{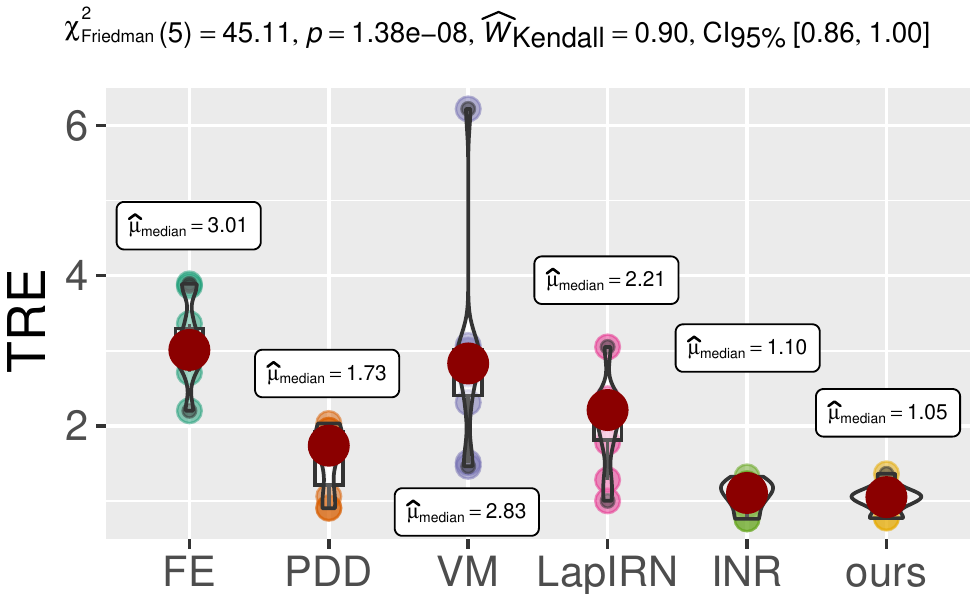}} \ 
\subfigure[] {\includegraphics [width=0.47\linewidth]{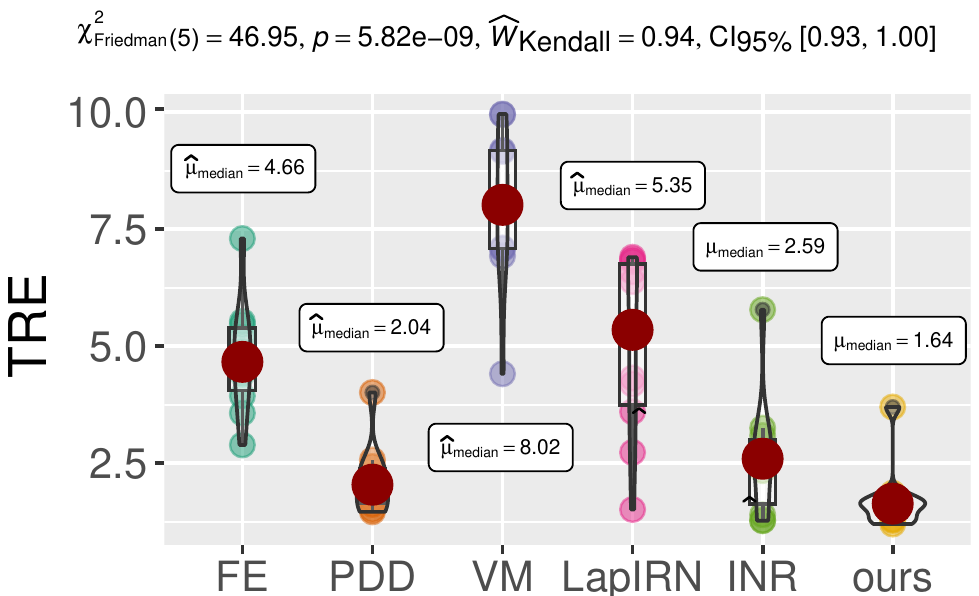}}
\caption{Statistical Analysis of our technique and existing methods. We performed the Friedman test for multiple comparisons along the Wilcoxon test for pair-wise comparison for (a) 4DCT and (b) COPD datasets. }
\label{FigTest}
\vspace{0pt}
\end{figure*}

We also ran a non-parametric test for multiple comparisons using the Friedman test along with Kendall's coefficient of concordance with 95\% confidence intervals as a measure of the effect size for the Friedman test. The statistical analyses for both datasets are reported in Fig.~\ref{FigTest}. We can conclude that there is a statistically significant difference in performance $\chi^2_{Friedman} (5)= 45.11 \& 46.95$ with $p=1.38e\text{-}08$ \& $5.82e\text{-}09$. The effect size is $W_{Kendall} = 0.90 \& 0.94$ with 95\% CI. We then performed  pair-wise comparisons using the non-parametric Wilcoxon test yielding to our technique being statistically significantly different in performance across all compared techniques.

\subsubsection{Qualitative Evaluation.}
We computed the Jacobian determinant and visually represented the results using a color map, where red indicates a positive determinant and blue indicates a negative determinant.  Fig.~\ref{FigJdet} displays the results of selected patients with: No regularisation, Hyperelastic regularisation~\citep{wolterink2022implicit,burger2013hyperelastic}, and our conformal invariant regulariser. We observe that without regularisation, one obtains 
a large number of negative values for the Jacobian determinant. This indicates that the deformation fields obtained were implausible and resulted in topological distortions.  
The Hyperelastic regulariser also reports substantial negative values for the Jacobian determinant (blue areas) along with large expansions. In contrast, our regulariser, reported all positive values yielding to clinical meaningful transformations.
These findings underscore the importance of appropriate regularisation in image registration and further support the superiority of our proposed registration framework.

\begin{figure*}[!h]
\centering
\setlength{\abovecaptionskip}{0pt}
\setlength{\belowcaptionskip}{0pt}
\ \includegraphics [width = 1\linewidth]{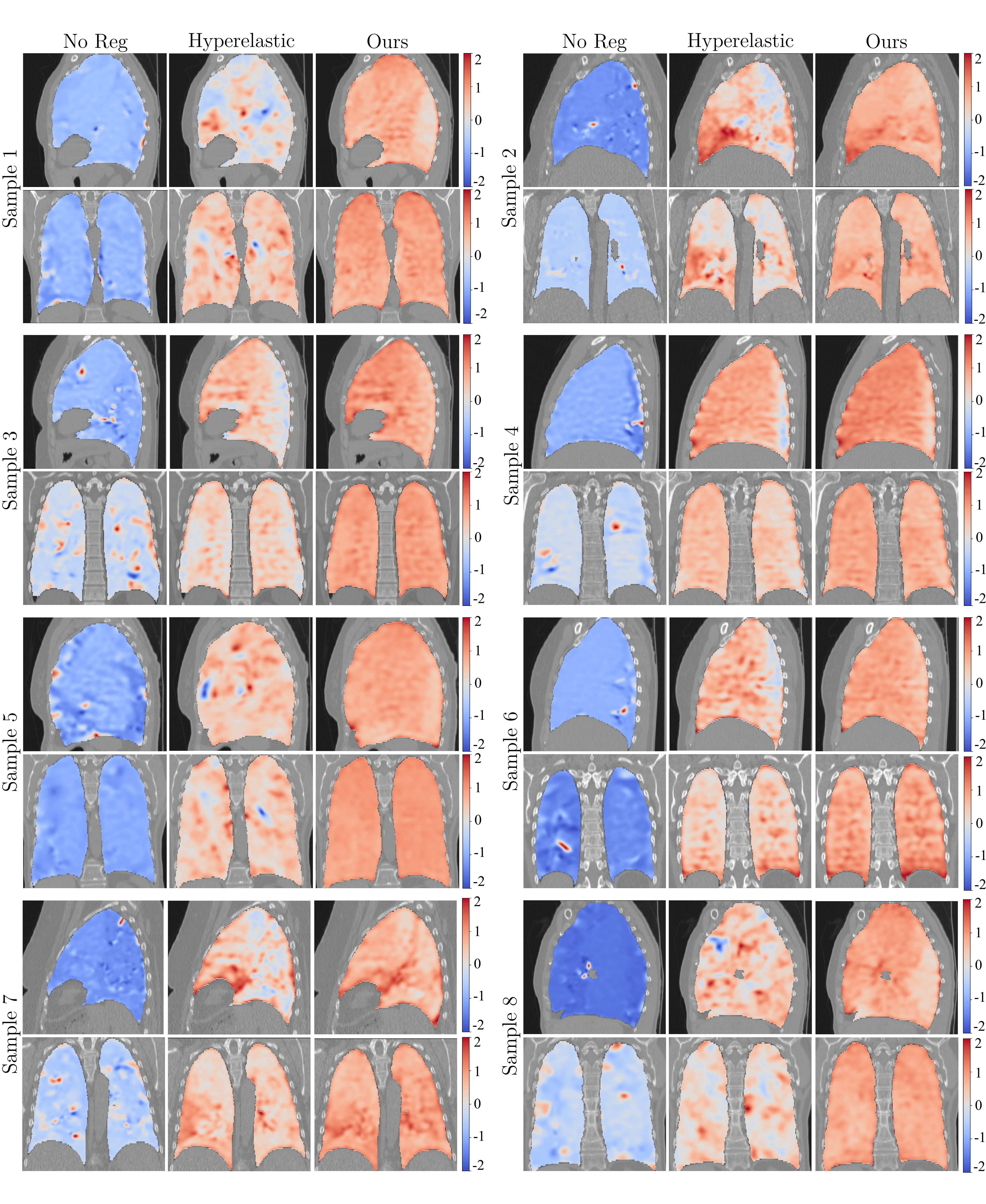}
\caption{Visual comparison in terms of Jacobian determinant of our proposed regulariser vs. no regularisation (`No Reg') and the most widely used hyperelastic regulariser.  }
\label{FigJdet}
\vspace{-2pt}
\end{figure*}
\subsubsection{Results on Different MLPs.}
In this section, we evaluate our conformal invariant regularization by integrating it with different coordinate MLP architectures, specifically MLP with periodic activation function ~\citep{wolterink2022implicit} and Fourier feature mapping  
MLP~\citep{tancik2020fourier}. 
The results of both the COPD dataset and the 4DCT dataset are presented in Table \ref{TableMLP} and the best results are illustrated in bold font. Results show that MLP with periodic activation achieved the best results. 
This experiment provides evidence supporting the notion that MLP with a periodic activation function is capable of learning more high-frequency features, thereby capturing finer details in the images.

\begin{table}[!t]\small
\renewcommand\arraystretch{1}
\centering
\caption{Numerical comparison between MLP with Periodic Activation Function (`Periodic') and MLP with Fourier Feature Mapping (`Fourier'). The numerical values reflect the TRE metric $(mm)$ for both the DIRLab COPD and the 4DCT datasets. Bold font represents better results.}
\begin{center}
\begin{tabular}{p{0.3in} p{0.5in}p{0.5in}
                          p{0.001in}
                          p{0.5in}p{0.5in}}
\toprule
\ &\multicolumn{2}{c}{COPD} &
  &\multicolumn{2}{c}{4DCT}\\
\cmidrule{2-3}
\cmidrule{5-6}
\ &{\ Periodic}&{\ Fourier} &{}
  &{\ Periodic}&{\ Fourier} \\
\midrule

{\ \ \ \ 01} &{\ \ \ \ \ \textbf{1.66}}  &{\ \ \ \ 2.11} & &{\ \ \ \ \ \textbf{0.78}} &{\ \ \ \ 0.81}\\

{\ \ \ \ 02} &{\ \ \ \ \ 3.70}  &{\ \ \ \ \textbf{2.46}} & &{\ \ \ \ \ \textbf{0.77}} &{\ \ \ \ 0.78}\\

{\ \ \ \ 03} &{\ \ \ \ \ \textbf{1.28}}  &{\ \ \ \ 1.56} & &{\ \ \ \ \ \textbf{0.94}} &{\ \ \ \ 1.01}\\ 

{\ \ \ \ 04} &{\ \ \ \ \ \textbf{1.62}}  &{\ \ \ \ 1.88} & &{\ \ \ \ \ 1.36} &{\ \ \ \ \textbf{1.35}} \\

{\ \ \ \ 05} &{\ \ \ \ \ \textbf{1.47}}  &{\ \ \ \ 1.87} & &{\ \ \ \ \ \textbf{1.20}} &{\ \ \ \ 1.23}\\

{\ \ \ \ 06} &{\ \ \ \ \ \textbf{1.86}}  &{\ \ \ \ 2.52} & &{\ \ \ \ \ 1.06} &{\ \ \ \ \textbf{1.03}}\\

{\ \ \ \ 07} &{\ \ \ \ \ \textbf{1.20}}  &{\ \ \ \ 1.61} & &{\ \ \ \ \ \textbf{0.97}} &{\ \ \ \ 1.04}\\

{\ \ \ \ 08} &{\ \ \ \ \ \textbf{1.65}}  &{\ \ \ \ 2.16} & &{\ \ \ \ \ \textbf{1.12}} &{\ \ \ \ 1.23} \\

{\ \ \ \ 09} &{\ \ \ \ \ \textbf{1.30}}  &{\ \ \ \ 1.72} & &{\ \ \ \ \ \textbf{1.08}} &{\ \ \ \ \textbf{1.08}} \\

{\ \ \ \ 10} &{\ \ \ \ \ \textbf{1.69}}  &{\ \ \ \ 2.35} & &{\ \ \ \ \ \textbf{1.04}} &{\ \ \ \ 1.09} \\

\midrule
{\ \ \ $Avg.$}&{\ \ \ \ \ \textbf{1.74}}&{\ \ \ \ 2.02}& &{\ \ \ \ \ \textbf{1.03}} &{\ \ \ \ 1.07}  \\
\bottomrule
\end{tabular}
\end{center}
\label{TableMLP}
\vspace{-16pt}
\end{table}

\section{Conclusion and Discussion}
It is rather difficult to collect large and representative medical image datasets for training the registration networks.
Moreover, existing learning-based registration methods 
lack theoretical guarantees for ensuring the homeomorphism of the deformation field, a crucial requirement in the medical domain. 
Therefore, in this paper, we proposed a new homeomorphic registration framework to solve these problems. We explicitly introduced a novel conformal-invariant hyperelastic regularisation, which yields to a clinically meaningful registration. 
Combining the advantage of neural fields--coordinate-based parameterisation of the physical properties across space and time, we build our framework based on a coordinate MLP with a periodic activation function for learning the high-frequency information of the input image pairs. 
Our extensive experiments on two public lung CT datasets demonstrated the benefits of our registration framework. 


One limitation of the proposed work is the computation time required for the registration process. 
Compared to standard learning-based methods that exhibit rapid inference within a few seconds, our registration framework consumes slightly more time (the total time required to register one pair of 3D volumes is 1.7 minutes for the COPD dataset and 1.1 minutes for the 4DCT dataset). However, it is important to note that our framework achieves one-shot registration avoiding the training with a large dataset and with no need for pre-affine registration. In this perspective, our registration framework is more computationally efficient.
As a consequence, together with the property of homeomorphism, the proposed framework has great potential to be applied in various practical clinical applications, such as tumor tracking in image-guided navigation systems and motion compensation in radiotherapy.

In the future, we plan to investigate mainly two works. Firstly, 
we will conduct a large cohort evaluation on different organs and modalities to assess the generalizability and robustness of our proposed framework. Secondly, we will explore our regulariser by integrating it with other intensity-based registration networks, we aim to enhance the registration accuracy and robustness even further.

In conclusion, we proposed a novel conformal-invariant hyperelastic regulariser for providing a theoretical guarantee of homeomorphism of the deformation field. To the best of our knowledge, we are the first to achieve homeomorphic registration by combining novel conformal-invariant hyperelastic regulariser with coordinate MLP. In future work, we will conduct more experiments and explore the regulariser in conjunction with other intensity-based networks.

\section*{Acknowledgements}
Jing Zou and Jing Qin gratefully acknowledge a General Research Fund of Hong Kong Research Grants Council (15205919).
Lihao Liu acknowledges the financial support from the GSK Ph.D. Scholarship and the Girton College Graduate Research Award at the University of Cambridge.
Angelica I Aviles-Rivero acknowledges the support from the Centre for Mathematical Imaging in Healthcare (CMIH) and the Cantab Capital Institute for the Mathematics of Information (CCIMI) at the University of Cambridge.
Carola-Bibiane Schönlieb acknowledges the support from the Philip Leverhulme Prize, the Royal Society Wolfson Fellowship, the EPSRC Advanced Career Fellowship EP/V029428/1, EPSRC grants EP/S026045/1, EP/T003553/1, EP/N014588/1, and EP/T017961/1, the Wellcome Innovator Awards 215733/Z/19/Z and 221633/Z/20/Z, the European Union Horizon 2020 research and innovation programme under the Marie Skodowska-Curie grant agreement No. 777826 NoMADS, as well as the Cantab Capital Institute for the Mathematics of Information (CCIMI) and the Alan Turing Institute.
%

\bibliographystyle{model2-names.bst}\biboptions{authoryear}
\bibliography{refs}

\end{document}